\documentclass[11pt,twoside]{article}

\usepackage{pgfplots}
\pgfplotsset{compat=newest}
\usepackage{tikz}

\usepackage{fancyhead}
\usepackage{psfrag}
\usepackage{epsfig}
\usepackage{graphicx}
\usepackage{amsmath}
\usepackage{amssymb}
\usepackage{amsthm}
\usepackage{subfigure}
\usepackage{cite}
\usepackage{color}
\usepackage{lipsum}
\usepackage[latin1]{inputenc}
\newcommand{\Ya}{R}

\newtheorem{lemma}{Lemma}
\newtheorem{theorem}{Theorem}

\textwidth 132mm \textheight 19cm \evensidemargin 0cm
\oddsidemargin 0cm

\pagestyle{fancyplain} \footrulewidth 0pt \headrulewidth 0.4pt
\lhead[\fancyplain{}{\small\thepage}]{\fancyplain{}{\small\rightmark}}
\rhead[\fancyplain{}{\small\leftmark}]{\fancyplain{}{\small\thepage}}
\lfoot{} \rfoot{} \chead{} \cfoot{}

\pagenumbering{arabic}

%

\def\ve#1{{\mathchoice{\mbox{\boldmath$\displaystyle #1$}}%
              {\mbox{\boldmath$\textstyle #1$}}%
              {\mbox{\boldmath$\scriptstyle #1$}}%
              {\mbox{\boldmath$\scriptscriptstyle #1$}}}}

\renewcommand{\i}{\ve{i}}
\renewcommand{\it}{\i_t}
\newcommand{\itm}{\i_{t-1}}

\newcommand{\itstar}{\it^{\ast}}
\newcommand{\itdiam}{\it^{\diamond}}
\newcommand{\itmstar}{\itm^{\ast}}

\newcommand{\0}{\ve{0}}
\newcommand{\G}{\ve{G}}
\newcommand{\Gt}{\G_0}
\newcommand{\Gtm}{\G_1}
\newcommand{\Gtstar}{\Gt^{\ast}}
\newcommand{\Gtdiam}{\Gt^{\diamond}}
\newcommand{\Gtmstar}{\Gtm^{\ast}}

\newcommand{\C}{\mathcal{C}}
\newcommand{\Calpha}{\C_{\alpha}}
\newcommand{\Czero}{\C_{0}}
\newcommand{\Cone}{\C_{1}}
\newcommand{\Czeroone}{\C_{01}}

\renewcommand{\c}{\ve{c}}
\newcommand{\ct}{\ve{c}_t}
\newcommand{\ctm}{\ve{c}_{t-1}}
\newcommand{\ctp}{\ve{c}_{t+1}}
\renewcommand{\r}{\ve{r}}
\newcommand{\e}{\ve{e}}

\newcommand{\dist}{\mathrm{d}}
\newcommand{\wt}{\mathrm{wt}}
\newcommand{\taualpha}{\tau_{\alpha}}
\newcommand{\tauzero}{\tau_{0}}
\newcommand{\tauone}{\tau_{1}}
\newcommand{\tauzeroone}{\tau_{01}}

\newcommand{\pa}{p_a}
\newcommand{\pb}{p_b}
\newcommand{\pc}{p_c}
\newcommand{\pd}{p_d}

\renewcommand{\L}{L}
\newcommand{\kone}{k_1}
\newcommand{\F}{\mathbb{F}}

\newcommand{\RowSpace}[1]{\left\langle #1 \right\rangle}

\newcommand{\X}{X}
\newcommand{\Prob}{\mathrm{P}}
\newcommand{\Qt}{Q_t}
\newcommand{\Qtm}{Q_{t-1}}
\newcommand{\Qone}{Q_1}
\newcommand{\Yat}{\Ya_t}
\newcommand{\Yatp}{\Ya_{t+1}}
\newcommand{\YaL}{\Ya_{\L}}

\newcommand{\Pt}{P_t}
\newcommand{\Pindt}{P_{t,\mathrm{ind}}}

\newcommand{\BigOtext}[1]{O(#1)}

\begin{document}

\vspace*{5mm}

\noindent
\textbf{\LARGE On the Success Probability of Decoding \\(Partial) Unit Memory Codes
}
\thispagestyle{fancyplain} \setlength\partopsep {0pt} \flushbottom
\date{}

\vspace*{5mm}
\noindent
\textsc{Sven Puchinger 
	} \hfill \texttt{sven.puchinger@uni-ulm.de} \\
\textsc{Sven Müelich} \hfill \texttt{sven.mueelich@uni-ulm.de} \\
\textsc{Martin Bossert} \hfill \texttt{martin.bossert@uni-ulm.de} \\
{\small Institute of Communications Engineering, University of Ulm, Germany}

\medskip

\begin{center}
\parbox{11,8cm}{\footnotesize
\textbf{Abstract.}
In this paper, we derive analytic expressions for the success probability of decoding (Partial) Unit Memory codes in memoryless channels. 
An applications of this result is that these codes outperform individual block codes in certain channels.
}
\end{center}

\baselineskip=0.9\normalbaselineskip

\section{Introduction}

\addtolength\belowdisplayskip{-0.5\baselineskip}%
\addtolength\abovedisplayskip{-0.5\baselineskip}%

(Partial) Unit Memory ((P)UM) codes, introduced in \cite{Lee1976} and \cite{Lauer1979}, are convolutional codes, defined using block codes.
Several (P)UM code constructions and a decoder based on the underlying block codes were proposed in \cite{Dettmar1993,Dettmar1994,Dettmar1995}.
Since these publications, there have been results on improving the decoding algorithm \cite{PWB2014}, extension to rank-metric codes \cite{WSS2014}, and applications to random linear network coding \cite{PCFBH2015} and the streaming scenario \cite{KB2016}.
In these applications, the codes were evaluated numerically in probabilistic channels and it was observed that (P)UM codes often outperform individual block codes in these scenarios.

In this paper, we derive analytic expressions for the probability of successfully recovering an
information block that is encoded with a (P)UM code, in memoryless channels.
Using these new expressions, we are able to partly explain the numerical observations in \cite{PCFBH2015} and \cite{KB2016} analytically.

\section{(Partial) Unit Memory Codes}

\lhead{} \rhead{}
\chead[\fancyplain{}{\small\sl\leftmark}]{\fancyplain{}{\small\sl
\leftmark}} \markboth{\hspace{-2,15cm}Eighth International Workshop
on Optimal Codes and Related Topics\\ July 10-14, 2017, Sofia, Bulgaria \hfill pp. xxx-xxx}{}

We use the description of (P)UM codes as in \cite{Dettmar1994}.
Let $k \leq n$ and $\kone \leq \min\{k,n-k\}$ be non-negative integers.
We choose matrices $\Gt$ and $\Gtm$ of the form
\begin{equation*}\small
\Gt = \begin{bmatrix} \Gtstar \\ \Gtdiam \end{bmatrix}, \quad \Gtm = \begin{bmatrix} \Gtmstar \\ \0 \end{bmatrix},
\end{equation*}
where the row spaces of the three matrices $\Gtstar, \Gtmstar \in \F^{\kone \times n}$ and $\Gtdiam \in \F^{k-\kone \times n}$ pairwise intersect only in the zero codeword.
Let $\RowSpace{\G}$ denote the row space of a matrix $\G$. We define the following codes:
\begin{equation*}\small
\Calpha := \RowSpace{
\begin{bmatrix}
\Gtstar \\
\Gtdiam \\
\Gtmstar
\end{bmatrix}
}, \quad
\Czero := \RowSpace{
\begin{bmatrix}
\Gtstar \\
\Gtdiam
\end{bmatrix}
}, \quad
\Cone := \RowSpace{
\begin{bmatrix}
\Gtdiam \\
\Gtmstar
\end{bmatrix}
}, \quad
\Czeroone := \RowSpace{\Gtdiam}.
\end{equation*}
By the properties of $\Gtstar, \Gtdiam, \Gtmstar$, the codes have parameters
\begin{equation*}
\Calpha(n,k+\kone), \quad \Czero(n,k), \quad \Cone(n,k), \quad \Czeroone(n,k-\kone).
\end{equation*}
Since $\Czeroone \subseteq \Cone \subseteq \Calpha$ and $\Czeroone \subseteq \Czero \subseteq \Calpha$, the error correction capability of $\Czeroone$ is typically the largest, followed by both $\Czero$ and $\Cone$. $\Calpha$ is the weakest code in terms of error correction.

\subsection{Encoding}

Given the generator matrices $\Gt$ and $\Gtm$, we encode a sequence of information vectors $\it \in \F^k$ ($t=0,\dots,\L$, where we choose $\i_0 = \i_\L = \0$) into a code sequence
\begin{equation*}
\c_t = \it \cdot \Gt + \itm  \cdot \Gtm \quad \text{for } t=1,\dots,\L.
\end{equation*}
Note that we can re-write this relation into
\begin{equation}
\c_t = \itstar \cdot \Gtstar + \itdiam \cdot \Gtdiam + \itmstar \cdot \Gtmstar. \label{eq:encoding}
\end{equation}
The information vectors and codewords corresponding to an index $t$ are called $t$-th \emph{block}. If $k<\kone$, the resulting code is called $(n,k|\kone)$ \emph{partial unit memory code} since the current codeword $\ct$ contains parts of the previous information word $\itm$.
If $k=\kone$ (note that we require $k\leq \tfrac{n}{2}$ in this case), the code is called $(n,k)$ \emph{unit memory} code since $\ct$ depends on the entire information vector $\itm$ (in particular, $\itm$ can be completely recovered by knowing $\ct$ or $\ctm$).

\subsection{Decoding}

\pagestyle{myheadings} \markboth{OC2017}{Puchinger, M\"uelich, Bossert}

The sequence of received words is of the form 
\begin{align*}
\r_t = \c_t + \e_t \quad \text{for } t=1,\dots,\L.
\end{align*}
We choose some\footnote{This can e.g.\ be the Hamming metric as in \cite{Dettmar1993} and \cite{KB2016}, or the rank metric as in \cite{WSS2014}.} metric $\dist(\cdot,\cdot) : \F^n \times \F^n \to \mathbb{R}_{\geq 0}$, and the corresponding weight $\wt(\cdot) = \dist(\cdot,\0)$, for which we know decoders of the codes
\begin{align*}
\Calpha,\Czero,\Cone,\Czeroone
\end{align*}
that can find all codewords with distance to the received word at most
\begin{align*}
\taualpha,\tauzero,\tauone,\tauzeroone,
\end{align*}
respectively.
We assume that $\taualpha < \tauzero = \tauone < \tauzeroone$ in this paper.\footnote{This is not a major restriction since most known PUM constructions, e.g.\ based on Reed--Solomon, BCH \cite{Dettmar1993}, or Gabidulin codes \cite{WSS2014}, provide codes $\Czero$, $\Cone$ of the same minimum distance.}
For notational convenience, we say that \emph{$t$ errors occurred} if the error word has weight $t$.

We use the description of decoding as in \cite{Dettmar1995}.
There, the Hamming metric in combination with bounded-minimum-distance decoders was used.
However, the decoder also works with list decoders in the Hamming metric \cite{PWB2014}, with rank-metric PUM codes \cite{WSS2014}, or with erasures \cite{KB2016}.
First, candidates for the codewords $\c_t$ are found in $4$ steps (see below). Afterwards, the most likely sequence $\c_1,\dots,\c_\L$ is found among these candidates using the Viterbi algorithm.
In this paper, we say that decoding is successful at the $t$-th position if the sent codeword $\c_t$ is among the candidates.
Finding the candidates works in $4$ steps:
\begin{enumerate}
\item Each received word $\r_t = \c_t + \e_t$ is decoded independently using the decoder of $\Calpha$ (note $\c_t \in \Calpha$). We can decode up to $\taualpha$ errors in this step.
\item Using the information fragment $\itmstar$ given by a candidate codeword $\ctm$, we can successfully decode the right neighbor $\ct$ in the code $\Czero$ if $\wt(\e_t)\leq \tauzero$, using the following relation (note that the left-hand side is known)
\begin{equation*}
\r_t - \itmstar \cdot \Gtmstar = \underset{\in \, \Czero}{\underbrace{\itstar \cdot \Gtstar + \itdiam \cdot \Gtdiam}} + \e_t.
\end{equation*}
We can repeat this so-called \emph{forward} step iteratively for all candidates.
\item Similar to Step~$2$, we can go in \emph{backward} direction by decoding
\begin{equation*}
\r_t - \itstar \cdot \Gtstar = \underset{\in \, \Cone}{\underbrace{\itdiam \cdot \Gtdiam + \itmstar \cdot \Gtmstar}} + \e_t,
\end{equation*}
in the code $\Cone$, which is successful if the number of errors is at most $\tauone$.
\item Using $\Czeroone$, we can find the $\c_t$ in positions $t$, where both neighbor blocks $t-1$ and $t+1$ have been successfully decoded and $\wt(\e_t) \leq \tauzeroone$, using
\begin{equation*}
\r_t - \itstar \cdot \Gtstar - \itmstar \cdot \Gtmstar = \underset{\in \, \Czeroone}{\underbrace{\itdiam \cdot \Gtdiam}} + \e_t,
\end{equation*}
\end{enumerate}

\section{New Expressions for the Success Probability}
\label{sec:new_bounds}

Let the PUM code and constituent decoders with decoding radii $\taualpha,\tauzero,\tauone,\tauzeroone$ be given.
We assume that the error words $\e_t$ are drawn i.i.d.\ at random according to an arbitrary distribution (memoryless channel).
Let $\X_1,\dots,\X_\L$ be the random variables describing the error weight, i.e., $\X_t := \wt(\e_t)$.
Thus, the $\X_t$ are also independently and identically distributed as some random variable $\X$.

In the following, we derive an expression for the probability
\begin{align*}
\Pt = \Prob(\i_t \text{ is found})
\end{align*}
that the $t$-th information word $\i_t$ of the PUM code is successfully recovered (i.e., among the candidates), only depending on the distribution of $\X$ and the position $t$.
The expression depends on the probabilities
\begin{align*}
\pa &:= \Prob(0 \leq X \leq \taualpha), &&\pb := \Prob(\taualpha < X \leq \tauzero), \\
\pc &:= \Prob(\tauzero < X \leq \tauzeroone), &&\pd := \Prob(\tauzeroone < X).
\end{align*}
Note that $\pa+\pb+\pc+\pd=1$.
Let $\Qt$ denote the probability that the $t$-th block is correctly decoded by Step~$1$ or~$2$ (individually or in forward direction).
Similarly, by $\Yat$ we define the probability that it is found by Step~$1$ or~$3$.
\begin{lemma}\label{lem:QtYat}
For all $t=1,\dots,\L$, we have
\begin{equation*}
\Qt = \tfrac{\pa}{1-\pb} + \pb^t\cdot \left(\tfrac{1-\pa-\pb}{1-\pb} \right) \text{ and } \, \Yat = \tfrac{\pa}{1-\pb} + \pb^{\L-t+1}\cdot \left(\tfrac{1-\pa-\pb}{1-\pb} \right).
\end{equation*}
\end{lemma}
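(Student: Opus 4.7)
The plan is to derive first-order linear recurrences for $\Qt$ and $\Yat$ and then solve them in closed form.

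For $\Qt$, I would observe that block $t$ is decoded by Step~$1$ or~$2$ in exactly one of two disjoint cases: either $\X_t \leq \taualpha$ and Step~$1$ succeeds at $t$ (probability $\pa$), or $\taualpha < \X_t \leq \tauzero$ (probability $\pb$) and moreover block $t-1$ was itself decoded by Step~$1$ or~$2$, so that the fragment $\itmstar$ needed to run Step~$2$ is available. Because the error weights are i.i.d.\ by the memoryless-channel assumption, the latter event is $\sigma(\X_1,\dots,\X_{t-1})$-measurable and thus independent of $\X_t$. This yields the recurrence $\Qt = \pa + \pb \Qtm$, with the convention $Q_0 := 1$ encoding the fact that $\i_0 = \0$ is known a priori; in particular $\Qone = \pa + \pb$, as direct counting confirms.

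I would then solve this affine recurrence by finding its unique fixed point $Q^{\ast} := \tfrac{\pa}{1-\pb}$ and observing that $\Qt - Q^{\ast} = \pb(\Qtm - Q^{\ast})$, so $\Qt = Q^{\ast} + (1 - Q^{\ast})\pb^{t}$. Substituting the value of $Q^{\ast}$ simplifies this to the stated closed form. For $\Yat$, the argument is the time-reversal of the one for $\Qt$: Step~$3$ at block $t$ requires the fragment $\itstar$, which is either trivially known at $t=\L$ because $\i_\L = \0$, or must be supplied by a candidate for $\c_{t+1}$ that was itself decoded by Step~$1$ or~$3$. The same disjoint case split gives $\Yat = \pa + \pb \Yatp$ with $\Ya_{\L+1} := 1$; after the index substitution $s = \L+1-t$ this becomes the $\Qt$-recurrence, yielding the closed form with $\pb^{\L-t+1}$ in place of $\pb^t$.

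The main obstacle, and the only step that is not pure algebra, is justifying the independence used in the recurrence: although ``block $t-1$ is decoded by Step~$1$ or~$2$'' is a complicated event describing the entire forward-decoding chain from some earlier Step-$1$ success up to position $t-1$, it is still a function of $\X_1,\dots,\X_{t-1}$ only, and hence independent of $\X_t$. Once this measurability observation is recorded, both closed forms follow from elementary first-order linear recurrences together with the appropriate boundary conditions supplied by $\i_0 = \i_\L = \0$.
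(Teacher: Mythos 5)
Your proposal is correct and follows essentially the same route as the paper: the same disjoint case split giving the recurrence $\Qt = \pa + \pb\,\Qtm$ (with base case equivalent to $\Qone = \pa+\pb$ from $\i_0=\0$) and its time-reversed analogue for $\Yat$, the only difference being that you solve the affine recurrence via its fixed point while the paper verifies the closed form by induction. Your explicit measurability/independence remark is a nice touch that the paper leaves implicit.
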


\begin{proof}
We prove the claim by induction. Since the information word $\i_0 = \0$ is known, we can directly decode the first codeword $\c_1$ in $\Czero$ and obtain
\begin{equation*}
\Qone = \Prob(\X_1 \leq \tauzero) = \pa+\pb = \tfrac{\pa+\pb-\pa\pb-\pb^2}{1-\pb} = \tfrac{\pa}{1-\pb} + \pb^1\cdot \left(\tfrac{1-\pa-\pb}{1-\pb} \right).
\end{equation*}
The probability that the $t$-th block is found in forward direction is given by the sum of the probability that it is found individually and the probability that Step~$1$ fails, but it is successfully recovered in forward direction, i.e.,
\begin{align*}
\Qt &= \Prob(\X_t \leq \taualpha) + \Prob(\taualpha < \X_{t} \leq \tauzero) \cdot \Qtm \\
&= \pa + \pb \cdot \left(\tfrac{\pa}{1-\pb} + \pb^{t-1} \cdot \left(\tfrac{1-\pa-\pb}{1-\pb} \right)\right) = \tfrac{\pa}{1-\pb} + \pb^t\cdot \left(\tfrac{1-\pa-\pb}{1-\pb} \right).
\end{align*}
The proof of for $\Yat$ is equivalent using the base case $\YaL = \pa+\pb$.
\end{proof}
Note that for all $t,L$, the probabilities $\Qt,\Yat$ are lower-bounded by $\tfrac{\pa}{1-\pb}$.

\subsection{Partial Unit Memory Codes ($\kone<k$)}
\label{subsec:new_bounds_pum}

In the case of PUM codes, the correct information vector $\it$ is found if and only if $\ct$ is found.
Hence, we can state the following result.
\begin{theorem}\label{thm:PUM_Pt}
For any $t=1,\dots,\L$, we have
\begin{equation*}
\Pt = \pa + \tfrac{\pa}{(1-\pb)^2} \big[\pb(2-\pa-2\pb) + \pa\pc \big] + \varepsilon(t,\L),
\end{equation*}
where $\varepsilon(t,\L) \geq 0$ and $\varepsilon(t,\L) \in \BigOtext{\max\{ \pb^t,\pb^{L-t} \}}$ (i.e., the term $\varepsilon(t,\L)$ is negligible if $t$ is sufficiently far away from $0$ and $\L$).
\end{theorem}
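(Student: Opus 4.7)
The plan is to condition on where $X_t$ falls in the partition $\{X_t\le\taualpha\}$, $\{\taualpha<X_t\le\tauzero\}$, $\{\tauzero<X_t\le\tauzeroone\}$, $\{X_t>\tauzeroone\}$ of masses $\pa,\pb,\pc,\pd$, and to apply the law of total probability. Two cases are immediate: if $X_t\le\taualpha$ then Step~$1$ already recovers block~$t$ (contribution $\pa$), and if $X_t>\tauzeroone$ none of the four steps can succeed (contribution $0$). The heart of the argument is the treatment of the $\pb$- and $\pc$-ranges, and it rests on the observation that $\Qtm$ is measurable with respect to $X_1,\dots,X_{t-1}$ while $\Yatp$ is measurable with respect to $X_{t+1},\dots,X_\L$, so $\Qtm$, $\Yatp$, and $X_t$ are mutually independent.

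In the $\pb$-range, Step~$2$ (forward) recovers block~$t$ iff block~$t-1$ was already a candidate after Steps~$1$ and~$2$, an event of probability $\Qtm$; symmetrically, Step~$3$ succeeds with probability $\Yatp$. Inclusion-exclusion together with the independence above yields a contribution $\pb\bigl(\Qtm+\Yatp-\Qtm\Yatp\bigr)$. In the $\pc$-range only Step~$4$ can help, and it needs both block $t-1$ and block $t+1$ among the candidates after Steps~$1$-$3$. The structural point here --- which I expect to be the main subtlety --- is that when $X_t>\tauzero$ no backward chain from the right can propagate through block~$t$, so block $t-1$ is recovered after Steps~$1$-$3$ iff $\Qtm$ holds, and symmetrically block $t+1$ iff $\Yatp$ holds. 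Independence then contributes $\pc\,\Qtm\Yatp$.

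Summing the four cases yields
\begin{equation*}
\Pt \;=\; \pa + \pb\bigl(\Qtm+\Yatp-\Qtm\Yatp\bigr) + \pc\,\Qtm\Yatp .
\end{equation*}
The remainder is bookkeeping: substitute the closed forms from Lemma~\ref{lem:QtYat} using the shorthands $\alpha:=\pa/(1-\pb)$ and $\beta:=(1-\pa-\pb)/(1-\pb)$, which conveniently satisfy $\alpha+\beta=1$ (so $1-\alpha=\beta$). Then $\Qtm=\alpha+\beta\,\pb^{t-1}$ and $\Yatp=\alpha+\beta\,\pb^{\L-t}$, and the product expansion cleanly splits $\Pt$ into a constant part $\pa+\pb(2\alpha-\alpha^2)+\pc\alpha^2$ (which rearranges to $\pa+\tfrac{\pa}{(1-\pb)^2}[\pb(2-\pa-2\pb)+\pa\pc]$, the stated main term) plus a residual $\varepsilon(t,\L)$ collecting all contributions proportional to $\pb^{t-1}$, $\pb^{\L-t}$, or $\pb^{\L-1}$. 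Non-negativity of $\varepsilon$ follows because all coefficients are non-negative and $\pb^{\L-1}\le\min\{\pb^{t-1},\pb^{\L-t}\}$ for $1\le t\le\L$, while $\varepsilon\in\BigOtext{\max\{\pb^t,\pb^{\L-t}\}}$ is immediate from the same exponentials. The only non-mechanical part of the argument is therefore the case-by-case reduction to $\Qtm$ and $\Yatp$ in the middle two cases, and in particular the one-dimensional chain argument ruling out the backward route in the $\pc$-case.
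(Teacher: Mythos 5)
Your proposal is correct and follows essentially the same route as the paper: the law of total probability over the four weight ranges giving $\Pt = \pa + \pb(\Qtm+\Yatp-\Qtm\Yatp) + \pc\,\Qtm\Yatp$, followed by substitution of the closed forms from Lemma~\ref{lem:QtYat} and collection of the $\pb^{t-1}$, $\pb^{\L-t}$, $\pb^{\L-1}$ terms into $\varepsilon(t,\L)$. Your explicit justification of independence and of why the backward/forward chains cannot circle through block~$t$ (so that only $\Qtm$ and $\Yatp$ matter) is a point the paper leaves implicit, but it is the same argument, not a different one.
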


\begin{proof}
We can write
\begin{align*}
\Pt &= \Prob(\i_t \text{ is found}) = \Prob(\c_t \text{ is found}) \\
&= \underset{\text{found in Step~$1$}}{\underbrace{\Prob(\X_t \leq \taualpha)}} + \underset{\text{found only in forward or backward direction}}{\underbrace{\Prob(\taualpha < \X_t \leq \tauzero) \cdot (\Qtm + \Yatp - \Qtm \Yatp)}}  \\
& \quad \quad + \underset{\text{found in Step } 4}{\underbrace{\Prob(\tauzero < \X_t \leq \tauzeroone) \cdot \Qtm \cdot \Yatp}} \\
&= \pa + \pb (\Qtm + \Yatp - \Qtm \Yatp) + \pc \Qtm \Yatp.
\end{align*}
Let $A := \tfrac{\pa}{1-\pb}$, $B := \tfrac{1-\pa-\pb}{1-\pb} \pb^{t-1}$, and $C:= \tfrac{1-\pa-\pb}{1-\pb} \pb^{L-t}$. Then, $A,B,C \geq 0$ and $B \in O(\pb^t)$ and $C \in O(\pb^{\L-t})$.
Also, $\Qtm = A+B$ and $\Yatp = A+C$, so
\begin{align*}
\Pt &= \pa + \pb (A+B+A+C-(A+B)(A+C)) + \pc (A+B)(A+C) \\
&= \pa + \tfrac{\pa}{(1-\pb)^2} \big[\pb(2-\pa-2\pb) + \pa\pc \big] + \varepsilon(t,\L), \text{ where} \\
\varepsilon(t,\L) &= \pb \underset{\geq (A+B)(B+C)-A(B+C)-BC = B^2 \geq 0}{\underbrace{(B+C-AB-AC-BC)}} + \pc (AB+AC+BC) \geq 0.
\end{align*}
Since all terms depend on $B$ or $C$, we have $\varepsilon(t,\L) \in O(\max\{ \pb^t,\pb^{L-t}\})$.
\end{proof}

\subsection{Unit Memory Codes ($k=\kone$)}
\label{sub_sec:new_bounds_um}

Unit memory codes have the advantage that we can obtain $\it$ from either $\ct$ or~$\ctp$.
In UM codes, $\Czeroone$ has dimension $k-\kone=0$, and hence, Step~$4$ is not useful.
On the other hand, we can define $\tauzeroone := \infty$, so $\pd = 0$ and $\pc = 1-\pa-\pb$.

\begin{theorem}\label{thm:UM_Pt}
For any $t$, there is a $\delta(t,\L) \geq 0$ with $\delta(t,\L) \in \BigOtext{\max\{ \pb^t,\pb^{L-t} \}}$:
\begin{equation*}
\Pt = 1-\left(\tfrac{\pc}{1-\pb}\right)^2 + \delta(t,\L).
\end{equation*}
\end{theorem}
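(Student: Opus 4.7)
The plan is to compute $\Pt$ via its complement, exploiting the property (emphasized just above the theorem) that for UM codes $\it$ can be recovered from either $\ct$ or $\ctp$. Thus $\Pt = 1 - \Pr(\ct \text{ and } \ctp \text{ both not found})$. Since $\tauzeroone = \infty$ and Step~4 is useless for UM codes, $\ct$ fails to be a candidate iff either $\X_t > \tauzero$ (the $\pc$-event), or $\taualpha < \X_t \leq \tauzero$ with both Step~2 and Step~3 failing at position $t$ (i.e., $\ctm$ is not reachable by forward and $\ctp$ is not reachable by backward).

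My approach is to condition the joint failure event on the ranges of $(\X_t,\X_{t+1})$. If either lies in the $\pa$-range, Step~1 succeeds at the corresponding position, so only four subcases contribute: $(\pc,\pc)$, $(\pc,\pb)$, $(\pb,\pc)$, and $(\pb,\pb)$. The first three are relatively straightforward: in the mixed $(\pc,\pb)$-case, for instance, $\ct$ is automatically not found, while the failure of $\ctp$ reduces to ``$\c_{t+2}$ not reachable by backward''. The main obstacle is the $(\pb,\pb)$-subcase: one must observe that, given both $\X_t,\X_{t+1}\in(\taualpha,\tauzero]$, the two events ``$\ct$ not found'' and ``$\ctp$ not found'' in fact coincide and both reduce to ``$\ctm$ not reachable by forward AND $\c_{t+2}$ not reachable by backward'', because ``$\ctp$ not reachable by backward'' collapses (via $\X_{t+1}\in(\taualpha,\tauzero]$) to ``$\c_{t+2}$ not reachable by backward'', and symmetrically on the other side. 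By independence of the left tail $\X_1,\dots,\X_{t-1}$ and the right tail $\X_{t+2},\dots,\X_\L$, the probability of this common event is $(1-\Qtm)(1-\Ya_{t+2})$. Summing the four contributions produces the factorization
\begin{equation*}
\Pr(\ct\text{ and }\ctp\text{ both not found}) = \bigl[\pc + \pb(1-\Qtm)\bigr]\cdot\bigl[\pc + \pb(1-\Ya_{t+2})\bigr].
\end{equation*}

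Finally, I would substitute the closed forms from Lemma~\ref{lem:QtYat}. A short algebraic manipulation using $\pc = 1-\pa-\pb$ yields $\pc + \pb(1-\Qtm) = \pc(1-\pb^t)/(1-\pb)$ and analogously $\pc + \pb(1-\Ya_{t+2}) = \pc(1-\pb^{\L-t})/(1-\pb)$. Multiplying and expanding $(1-\pb^t)(1-\pb^{\L-t}) = 1 - \pb^t - \pb^{\L-t} + \pb^{\L}$ then gives the theorem with $\delta(t,\L) = (\pc/(1-\pb))^2 (\pb^t + \pb^{\L-t} - \pb^{\L})$, which is nonnegative (since $t \leq \L$) and clearly lies in $\BigOtext{\max\{\pb^t,\pb^{\L-t}\}}$.
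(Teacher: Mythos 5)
Your proof is correct, and at its core it is the complementary form of the paper's argument: the paper simply writes $\Pt=\Qt+\Yatp-\Qt\Yatp$ (the union of the two independent events ``forward chain reaches block $t$'' and ``backward chain reaches block $t+1$'') and substitutes Lemma~\ref{lem:QtYat}, whereas you compute $1-\Pt=(1-\Qt)(1-\Yatp)$; indeed your factor $\pc+\pb(1-\Qtm)$ is exactly $1-\Qt$ by the recursion $\Qt=\pa+\pb\,\Qtm$ together with $\pd=0$, and likewise $\pc+\pb(1-\Ya_{t+2})=1-\Yatp$. What your route buys is a justification of the identity the paper asserts without comment: the case analysis on the ranges of $(\X_t,\X_{t+1})$ -- in particular the collapse, in the $(\pb,\pb)$ case, of both failure events to ``$\ctm$ not forward-reachable and $\c_{t+2}$ not backward-reachable'' -- checks that no other decoding path contributes, and it invokes the independence of the left tail $\X_1,\dots,\X_{t-1}$ and the right tail $\X_{t+2},\dots,\X_\L$ explicitly. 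You also obtain the exact remainder $\delta(t,\L)=\big(\tfrac{\pc}{1-\pb}\big)^2\big(\pb^t+\pb^{\L-t}-\pb^{\L}\big)$, which agrees with what the paper's $A,B,C$ computation yields and makes the nonnegativity and the $\BigOtext{\max\{\pb^t,\pb^{\L-t}\}}$ claim immediate; the paper's proof is shorter but leaves these details (and a sign slip in its intermediate display, where $-(A+B)(A+C)$ appears with a plus sign) to the reader.
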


\begin{proof}
We define $A,B,C$ as in the proof of Theorem~\ref{thm:PUM_Pt}. Then, we can write
\begin{align*}
\Pt &= \Qt + \Yatp - \Qt \Yatp = A+B+A+C+(A+B)(A+C) \\
&= A(2-A) + \underset{=: \, \delta(t,\L)}{\underbrace{B+C-A(B+C)-BC}} = 1-\left(\tfrac{\pc}{1-\pb}\right)^2 + \delta(t,\L),
\end{align*}
where $\delta(t,\L)$ has the desired properties.
\end{proof}

\section{Applications}

\subsection{Fast Code Design}

Based on the results in Section~\ref{sec:new_bounds}, it is possible to determine the failure probability of decoding a PUM code block only from the probability density function (pdf) of the error weight in a block (which is given by the channel and the block length $n$).
Hence, as soon as this pdf is determined (either theoretically or numerically), one can compute the decoding failure probability for any code parameter set, i.e., variations of $k$, $\kone$, $\taualpha$, $\tauzero$, $\tauone$, and $\tauzeroone$, without the need for computationally expensive Monte-Carlo simulations.
This allows to optimize code parameters (e.g. $\kone$ for given $k$) quickly.

Figure \ref{fig:monte_carlo_vs_exact} shows the results of a Monte-Carlo simulation compared to the exact failure probability expressions from Section~\ref{sec:new_bounds}. As expected, the resulting curves coincide up to the estimation error of the Monte-Carlo simulation.

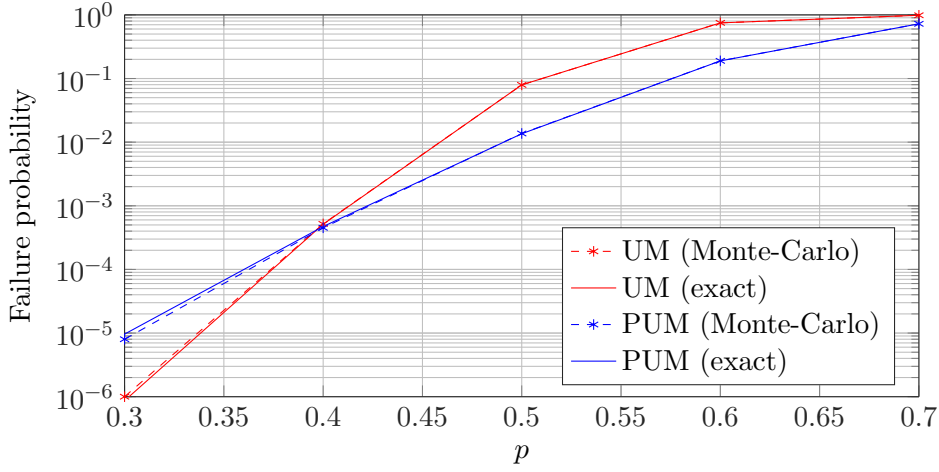
\begin{figure}[h]
%
%
\begin{tikzpicture}

\begin{axis}[%
width=0.8\textwidth,
height=2in,
scale only axis,
separate axis lines,
every outer x axis line/.append style={white!15!black},
every x tick label/.append style={font=\color{white!15!black}},
xmin=0.3,
xmax=0.7,
xmajorgrids,
xlabel={$p$},
every outer y axis line/.append style={white!15!black},
every y tick label/.append style={font=\color{white!15!black}},
ymode=log,
ymin=1e-6,
ymax=1,
yminorticks=true,
ymajorgrids,
yminorgrids,
ylabel={Failure probability},
legend style={at={(0.97,0.03)},anchor=south east,draw=white!15!black,fill=white,legend cell align=left}
]
\addplot [color=red,dashed,mark=asterisk,mark options={solid}]
  table[row sep=crcr]{%
0.2	0\\
0.3	1e-06\\
0.4	0.000519\\
0.5	0.079397\\
0.6	0.749582\\
0.7	0.986132\\
};
\addlegendentry{UM (Monte-Carlo)};

\addplot [color=red,solid]
  table[row sep=crcr]{%
0.2	1.76140679265821e-10\\
0.3	8.66190499355916e-07\\
0.4	0.000513363384624099\\
0.5	0.0794773140435064\\
0.6	0.748684477113017\\
0.7	0.985978167938645\\
};
\addlegendentry{UM (exact)};

\addplot [color=blue,dashed,mark=asterisk,mark options={solid}]
  table[row sep=crcr]{%
0.2	0\\
0.3	8e-06\\
0.4	0.000451\\
0.5	0.013687\\
0.6	0.189529\\
0.7	0.723939\\
};
\addlegendentry{PUM (Monte-Carlo)};

\addplot [color=blue,solid]
  table[row sep=crcr]{%
0.2	5.73586115271141e-08\\
0.3	9.63108851848027e-06\\
0.4	0.000472349494225033\\
0.5	0.0135612992270332\\
0.6	0.189039748875718\\
0.7	0.724074460091017\\
};
\addlegendentry{PUM (exact)};
\end{axis}
\end{tikzpicture}%
\caption{Estimated failure probability by Monte-Carlo simulation with $10^6$ samples (solid lines) compared to the exact failure probability expressions $\Pt$ from Section~\ref{sec:new_bounds} of a $(15,5|2)$ PUM code ($\taualpha = 8$, $\tauzero=\tauone=10$, $\tauzeroone=12$) and a $(15,5)$ UM code  ($\taualpha = 5$, $\tauzero=\tauone=10$) with $L=100$ and $t=50$. The error weight in each block is independently binomially distributed with parameters $n=15$ and $p$ (varying parameter).}
\label{fig:monte_carlo_vs_exact}
\end{figure}

\subsection{(P)UM Codes vs.\ Independent Block Codes}
\label{subsec:PUM_vs_MDS}

Let $\C(n,k)$ be a linear block code with the same rate as the PUM code. For a fair comparison, we assume that decoding in $\C$ is possible up to $\tauzero$ (i.e., as the decoding radius of the code $\Czero$ in the PUM coding scheme).
Consider a channel in which a position independently adds $1$ to the error weight of the block with probability $p$ (i.e., the error weight is binomially distributed with parameters $n$ and $p$).

In this section, we show that for $p \to 0$, the failure probability of decoding (P)UM codes gets below the one of encoding/decoding each information block independently in $\C$ (i.e., for a generator matrix $\G$ of $\C$, we have $\ct = \it \cdot \G$ for all $t$).
This fact was observed before in numerical simulations, e.g., in \cite{PWB2014} or \cite{KB2016}, but no theoretical explanation was known.
We require the following observations.

\begin{lemma}[\!\!{\cite[page~115]{Ash1990}}]
\label{lem:tail_bounds}
For $\tau > pn$, we have
\begin{align*}
\Prob(X \geq \tau) \begin{cases}
\leq \exp\left( - \tau \log\big(\tfrac{\tau}{n p}\big) - (n-\tau) \log\big(\tfrac{n-\tau}{n (1-p)}\big) \right) \\
\geq \tfrac{1}{\sqrt{2n}}\exp\left( - \tau \log\big(\tfrac{\tau}{n p}\big) - (n-\tau) \log\big(\tfrac{n-\tau}{n (1-p)}\big) \right)
\end{cases}
\end{align*}
\end{lemma}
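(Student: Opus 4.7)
The plan is to prove the two inequalities separately by standard techniques: the upper bound by the Chernoff method, and the lower bound by keeping only the $k=\tau$ term in the binomial tail and invoking Stirling's approximation.

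For the upper bound, I would write $X=\sum_{i=1}^n X_i$ as a sum of i.i.d.\ Bernoulli$(p)$ random variables, so that for any $s>0$ Markov's inequality applied to $e^{sX}$ yields
$$\Prob(X \geq \tau) \;\leq\; e^{-s\tau}\bigl(1-p+p\,e^s\bigr)^n.$$
Differentiating the right-hand side in $s$ and setting the derivative to zero gives the optimal choice $e^s = \tfrac{\tau(1-p)}{(n-\tau)p}$, which is strictly larger than $1$ precisely because of the hypothesis $\tau>pn$, so this $s$ is admissible. Substituting back, the factor $1-p+pe^s$ simplifies to $\tfrac{n(1-p)}{n-\tau}$, and after collecting terms the bound becomes $\bigl(\tfrac{np}{\tau}\bigr)^\tau\bigl(\tfrac{n(1-p)}{n-\tau}\bigr)^{n-\tau}$; taking the logarithm of each factor recovers the claimed exponential.

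For the lower bound, I would discard all terms but $k=\tau$ in
$$\Prob(X \geq \tau) \;=\; \sum_{k=\tau}^n \binom{n}{k}p^k(1-p)^{n-k} \;\geq\; \binom{n}{\tau}\,p^\tau(1-p)^{n-\tau},$$
and then bound the binomial coefficient from below using Stirling's formula. A sufficiently explicit two-sided Stirling estimate of the form $\sqrt{2\pi m}\,(m/e)^m \leq m! \leq e\sqrt{m}\,(m/e)^m$ leads, after a short calculation, to an inequality of the shape $\binom{n}{\tau} \geq \tfrac{1}{\sqrt{2n}}\cdot\tfrac{n^n}{\tau^\tau(n-\tau)^{n-\tau}}$. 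Multiplying by $p^\tau(1-p)^{n-\tau}$ and rewriting the exponent in the form $-\tau\log\tfrac{\tau}{np} - (n-\tau)\log\tfrac{n-\tau}{n(1-p)}$ then produces the stated lower bound.

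The main obstacle is the bookkeeping of the constant in front of the exponential in the lower bound: the polynomial prefactor $\sqrt{n/(\tau(n-\tau))}$ that naturally arises from Stirling must be bounded below by $1/\sqrt{n}$ (using $\tau(n-\tau)\leq n^2/4$) and the remaining numerical constants absorbed so that exactly $1/\sqrt{2n}$ remains. By contrast, the upper-bound direction is essentially mechanical once one checks that the optimal $s$ is positive, which is precisely where the assumption $\tau > pn$ enters.
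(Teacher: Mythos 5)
The paper itself gives no proof of this lemma---it is quoted directly from Ash \cite[p.~115]{Ash1990}---so there is no internal argument to compare against; your proposal has to stand on its own, and in outline it does. The upper-bound half is complete as described: the exponential Markov bound $\Prob(X\geq\tau)\leq e^{-s\tau}(1-p+pe^s)^n$, the optimal tilt $e^s=\tfrac{\tau(1-p)}{(n-\tau)p}$ (admissible precisely because $\tau>pn$), and the simplification $1-p+pe^s=\tfrac{n(1-p)}{n-\tau}$ do yield $\big(\tfrac{np}{\tau}\big)^\tau\big(\tfrac{n(1-p)}{n-\tau}\big)^{n-\tau}$, which is the stated exponential. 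The single-term-plus-Stirling plan for the lower bound is also the standard (essentially Ash's) route.

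There is, however, a genuine quantitative gap exactly where you flag "bookkeeping": the Stirling pair you quote, $\sqrt{2\pi m}\,(m/e)^m\leq m!\leq e\sqrt{m}\,(m/e)^m$, combined with $\tau(n-\tau)\leq n^2/4$, gives the prefactor $\tfrac{2\sqrt{2\pi}}{e^2}\cdot\tfrac{1}{\sqrt{n}}\approx\tfrac{0.68}{\sqrt{n}}$, which is strictly smaller than the required $\tfrac{1}{\sqrt{2n}}\approx\tfrac{0.71}{\sqrt{n}}$; so that exact chain does not reach the stated constant, and the remaining deficit cannot simply be "absorbed". To recover $1/\sqrt{2n}$ you need the sharper two-sided estimate $\sqrt{2\pi m}\,(m/e)^m e^{1/(12m+1)}\leq m!\leq\sqrt{2\pi m}\,(m/e)^m e^{1/(12m)}$, and even then you cannot multiply the two worst-case bounds blindly: the correction $e^{-1/(12\tau)-1/(12(n-\tau))}$ is worst when $\min\{\tau,n-\tau\}=1$, which is precisely where $\tau(n-\tau)\leq n^2/4$ is far from tight, and combining both worst cases gives $\tfrac{2}{\sqrt{\pi}}e^{-1/6}\approx 0.95<1$. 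A clean repair is a short case split: for $\min\{\tau,n-\tau\}=1$ the claim reduces to $\sqrt{2n}\geq\big(1+\tfrac{1}{n-1}\big)^{n-1}$, which holds for all $n\geq 2$ (with equality at $n=2$, showing the constant $1/\sqrt{2n}$ is sharp there); for $\min\{\tau,n-\tau\}\geq 2$ one has $\tfrac{2}{\sqrt{\pi}}e^{-1/12}>1$, which suffices. With that adjustment your lower bound holds as stated.
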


Lemma~\ref{lem:tail_bounds} implies that $\Prob(X \geq \tau) \in \Theta(p^\tau (1-p)^{n-\tau})$ when considering $n$ and $\tau$ as constants. For $p \to 0$, this means that $\Prob(X \geq \tau) \in \Theta(p^\tau)$, which implies the following lower and upper bounds on the probabilities $\pa$,...,$\pd$.
\begin{lemma}
For $p \to 0$, we have
\begin{align*}
\pa &\in \Theta(1-p^{\taualpha+1}) &&\pb \in \Theta(p^{\taualpha+1}-p^{\tauzero+1}) \\
\pc &\in \Theta(p^{\tauzero+1}-p^{\tauzeroone+1}) &&\pd \in \Theta(p^{\tauzeroone+1})
\end{align*}
\end{lemma}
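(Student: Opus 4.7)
The plan is to express each of $\pa, \pb, \pc, \pd$ as a combination of tail probabilities of the form $\Prob(\X \geq \tau)$ and then substitute the asymptotic estimate $\Prob(\X \geq \tau) \in \Theta(p^\tau)$ noted immediately before the lemma statement. Concretely, partitioning $\{0,\dots,n\}$ along the three cutoffs $\taualpha, \tauzero, \tauzeroone$ yields the identities
\begin{align*}
\pa &= 1 - \Prob(\X \geq \taualpha+1), &\pb &= \Prob(\X \geq \taualpha+1) - \Prob(\X \geq \tauzero+1), \\
\pc &= \Prob(\X \geq \tauzero+1) - \Prob(\X \geq \tauzeroone+1), &\pd &= \Prob(\X \geq \tauzeroone+1).
\end{align*}

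Before using the tail estimate I would briefly verify that Lemma~\ref{lem:tail_bounds} applies to each term: since $n$ and each $\tau \in \{\taualpha+1,\tauzero+1,\tauzeroone+1\}$ are fixed positive integers, the hypothesis $\tau > pn$ is satisfied for all sufficiently small $p$. The lemma then pinches $\Prob(\X \geq \tau)$ between constant multiples of $p^\tau (1-p)^{n-\tau}$, and because $(1-p)^{n-\tau} \to 1$ as $p \to 0$, this simplifies to $\Prob(\X \geq \tau) \in \Theta(p^\tau)$. The four asymptotic statements in the lemma then follow by plugging these estimates into the four identities above.

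The only subtlety — and what I would expect to be the main (mild) obstacle — is that $\Theta$ does not commute with subtraction, so the claims for $\pb$ and $\pc$ deserve explicit justification rather than pure symbol-pushing. For $\pb$, I would use the two-sided bounds $c_1 p^{\taualpha+1} \leq \Prob(\X \geq \taualpha+1) \leq c_2 p^{\taualpha+1}$ together with analogous bounds on $\Prob(\X \geq \tauzero+1)$, and then exploit $\taualpha < \tauzero$ to conclude $p^{\tauzero+1} = o(p^{\taualpha+1})$. Consequently the ratio $\pb / (p^{\taualpha+1} - p^{\tauzero+1})$ remains bounded between two positive constants as $p \to 0$, which is exactly the desired $\Theta$-statement. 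The argument for $\pc$ is identical in structure. For $\pa$, the claim $\pa \in \Theta(1 - p^{\taualpha+1})$ reduces via the first identity to $1 - \pa \in \Theta(p^{\taualpha+1})$, and is equivalent to the stated form because $1 - c p^{\taualpha+1}$ and $1 - p^{\taualpha+1}$ differ by a constant factor in the limit $p \to 0$.
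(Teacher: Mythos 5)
Your proposal is correct and follows essentially the same route the paper intends: the paper gives no explicit proof, merely invoking Lemma~\ref{lem:tail_bounds} to get $\Prob(\X \geq \tau) \in \Theta(p^\tau)$ and then reading off the four claims from the decomposition into tail probabilities. Your additional care with the subtraction (using $p^{\tauzero+1} = o(p^{\taualpha+1})$ and $p^{\tauzeroone+1} = o(p^{\tauzero+1})$ so that the $\Theta$-estimates survive taking differences) fills in exactly the detail the paper leaves implicit, and is sound.
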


In the following subsection, we show that UM and PUM codes outperform decoding in the code $\C$ for small values of $p$.
The result is illustrated in Figure~\ref{fig:PUM_vs_independent}.

\subsubsection{UM Codes}

Let $\Pt$ be the probability that decoding in the UM code is correct in block $t$ and let $\Pindt$ be the probability that decoding in $\C$ is successful.
\begin{theorem}\label{thm:UM_vs_MDS}
There is a $p'>0$ such that $\Pt>\Pindt$ for all $p<p'$.
\end{theorem}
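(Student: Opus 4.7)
The plan is to express both $\Pt$ and $\Pindt$ explicitly in terms of $\pa, \pb, \pc$ and then show that their difference is strictly positive for all sufficiently small $p$. Since the block code $\C$ decodes correctly whenever at most $\tauzero$ errors occur, we immediately get $\Pindt = \pa + \pb$, and in the UM setting $\pd = 0$ forces $\pa + \pb + \pc = 1$, so $\Pindt = 1 - \pc$.

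Inserting this together with the expression of Theorem~\ref{thm:UM_Pt} yields
\begin{equation*}
\Pt - \Pindt = \pc - \left(\tfrac{\pc}{1-\pb}\right)^2 + \delta(t,\L) = \pc\left(1 - \tfrac{\pc}{(1-\pb)^2}\right) + \delta(t,\L).
\end{equation*}
Because Theorem~\ref{thm:UM_Pt} guarantees $\delta(t,\L) \geq 0$, the boundary correction only helps, and it is enough to establish the single inequality $(1-\pb)^2 > \pc$; together with $\pc > 0$ this forces the first term to be strictly positive uniformly in $t$ and $\L$.

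The asymptotic step is then immediate from the lemma preceding the statement: as $p \to 0$ we have $\pb \in \Theta(p^{\taualpha+1})$ and $\pc \in \Theta(p^{\tauzero+1})$, so $(1-\pb)^2 \to 1$ while $\pc \to 0$. Hence there exists a threshold $p' > 0$ below which $(1-\pb)^2 > \pc$, and the theorem follows.

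I do not foresee a substantial obstacle: the heavy lifting has already been absorbed into Theorem~\ref{thm:UM_Pt}, and the non-negativity of $\delta(t,\L)$ removes any need to track the $t$- or $\L$-dependent corrections. The only point requiring a moment of care is that the factor $\pc$ is positive only for $p > 0$; at $p = 0$ both probabilities trivially equal $1$, so the strict inequality should be understood on $(0,p')$ rather than on $[0,p')$.
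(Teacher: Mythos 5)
Your proposal is correct and follows essentially the same route as the paper: both reduce the claim, via Theorem~\ref{thm:UM_Pt} with $\delta(t,\L)\geq 0$ and $\Pindt = \pa+\pb = 1-\pc$, to the inequality $\tfrac{\pc}{(1-\pb)^2}<1$, and both conclude from the $\Theta$-estimates that this holds for all sufficiently small $p$. Your explicit factoring of the difference as $\pc\bigl(1-\tfrac{\pc}{(1-\pb)^2}\bigr)+\delta(t,\L)$ and the remark about strictness at $p=0$ are just slightly more detailed presentations of the same argument.
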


\begin{proof}
Since $\Pt \geq 1-\big( \tfrac{\pc}{1-\pb} \big)^2$ and $\Pindt = \pa+\pc = 1-\pc$, it suffices to show that $\tfrac{\pc}{(1-\pb)^2}<1$ for small $p$.
This directly follows from
\begin{align*}
\tfrac{\pc}{(1-\pb)^2} \in \Theta\left( \tfrac{p^{\tauzero+1}-p^{\tauzeroone+1}}{1-p^{\taualpha+1}}\right) = \Theta\Big( p^{\tauzero+1} \cdot \underset{\in \, \Theta(1)}{\underbrace{\tfrac{1-p^{(\tauzeroone-\tauzero)}}{1-p^{\taualpha}}}}\Big) \to 0 \quad (p \to 0).
\end{align*}
\end{proof}

\subsubsection{PUM Codes}

Now, let $\Pt$ be the probability that decoding in the PUM code is correct in block $t$. We obtain as similar result as for UM codes.
\begin{theorem}\label{thm:PUM_vs_MDS}
There is a $p'>0$ such that $\Pt>\Pindt$ for all $p<p'$.
\end{theorem}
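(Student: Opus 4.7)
My plan is to follow the same template as the proof of Theorem~\ref{thm:UM_vs_MDS}: first discard the non-negative term $\varepsilon(t,\L)$ from Theorem~\ref{thm:PUM_Pt} to get a clean lower bound on $\Pt$, then compare it to the independent block-code success probability $\Pindt = \pa + \pb$, and finally invoke the asymptotic estimates of $\pa,\pb,\pc,\pd$ from the preceding lemma to show that the difference is positive for small enough $p$.

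The key algebraic step is to reintroduce $A := \pa/(1-\pb)$, exactly as in the proof of Theorem~\ref{thm:PUM_Pt}. A short rewrite yields
\begin{equation*}
\tfrac{\pa \pb(2-\pa-2\pb)}{(1-\pb)^2} = \pb(2A - A^2) = \pb - \pb(1-A)^2, \qquad \tfrac{\pa^2 \pc}{(1-\pb)^2} = A^2 \pc.
\end{equation*}
Substituting into the expression from Theorem~\ref{thm:PUM_Pt} and subtracting $\Pindt = \pa + \pb$ gives
\begin{equation*}
\Pt - \Pindt \;\geq\; A^2 \pc - (1-A)^2 \pb \;=\; \tfrac{\pa^2 \pc - \pb (\pc + \pd)^2}{(1-\pb)^2},
\end{equation*}
where the last equality uses $1-A = (1-\pa-\pb)/(1-\pb) = (\pc+\pd)/(1-\pb)$. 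Hence it suffices to show that $\pa^2 \pc > \pb (\pc + \pd)^2$ for all sufficiently small $p$.

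For the asymptotic step I would invoke the lemma preceding this subsection: $\pa \to 1$, $\pb \in \Theta(p^{\taualpha+1})$, $\pc \in \Theta(p^{\tauzero+1})$, and $\pd \in \Theta(p^{\tauzeroone+1})$. Because $\tauzero < \tauzeroone$, $\pc$ dominates $\pd$, so $(\pc+\pd)^2 \in \Theta(p^{2(\tauzero+1)})$. Consequently $\pa^2 \pc \in \Theta(p^{\tauzero+1})$ while $\pb(\pc+\pd)^2 \in \Theta(p^{\taualpha + 2\tauzero + 3})$. Since $\taualpha + 2\tauzero + 3 > \tauzero + 1$, the numerator is strictly positive for all small enough $p > 0$, which yields the desired $p'$.

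The main obstacle is really just bookkeeping: grouping the terms of Theorem~\ref{thm:PUM_Pt} into the form $A^2 \pc - (1-A)^2 \pb$, after which the positivity follows from a transparent comparison of exponents of $p$. Structurally, the argument is the same as in the UM case: the gain from the $\Czeroone$-step scales like $\pc = \Theta(p^{\tauzero+1})$, while the penalty for occasionally failing in the $\Calpha$-step scales like $\pb\pc^2 = \Theta(p^{\taualpha+2\tauzero+3})$, which is negligible in the small-$p$ regime.
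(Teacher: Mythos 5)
Your proposal is correct and follows essentially the same route as the paper: lower-bound $\Pt$ by dropping $\varepsilon(t,\L)$, reduce the comparison with $\Pindt=\pa+\pb$ to the condition $\pb(\pc+\pd)^2 < \pa^2\pc$ (your $A$-substitution just makes explicit the algebraic reformulation the paper states without detail), and conclude via the $\Theta$-estimates since the exponent $\taualpha+2\tauzero+3$ exceeds $\tauzero+1$. No gaps; if anything, your exponent bookkeeping is cleaner than the paper's.
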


\begin{proof}
By Theorem~\ref{thm:PUM_Pt}, we get $\Pt \geq \pa + \tfrac{\pa}{(1-\pb)^2} \big[\pb(2-\pa-2\pb) + \pa\pc \big]$, so it suffices to show that $\pa + \tfrac{\pa}{(1-\pb)^2} \big[\pb(2-\pa-2\pb) + \pa\pc \big] > \pa+\pb = \Pindt$ for small $p$. This condition can be reformulated into
\begin{align*}
\tfrac{\pb (\pc+\pd)^2}{\pc \pa^2} < 1.
\end{align*}
The left-hand side is
\begin{align*}
\tfrac{\pb \cdot (\pc+\pd)^2}{\pc \cdot \pa^2} \in O\Big( \tfrac{p^{\taualpha+1} \big(p^{\tauzero+1}\big)^2}{\big(p^{\tauzero+1}-p^{\tauzeroone+1}\big)\cdot \big(1-p^{\taualpha+1} \big)} \Big)
\subseteq O\Big( p^{\taualpha+2 \tauzero+1-\tauzero} \Big) \to 0 \quad (p \to 0),
\end{align*}
which proves the claim.
\end{proof}

\begin{figure}[h!]
\input{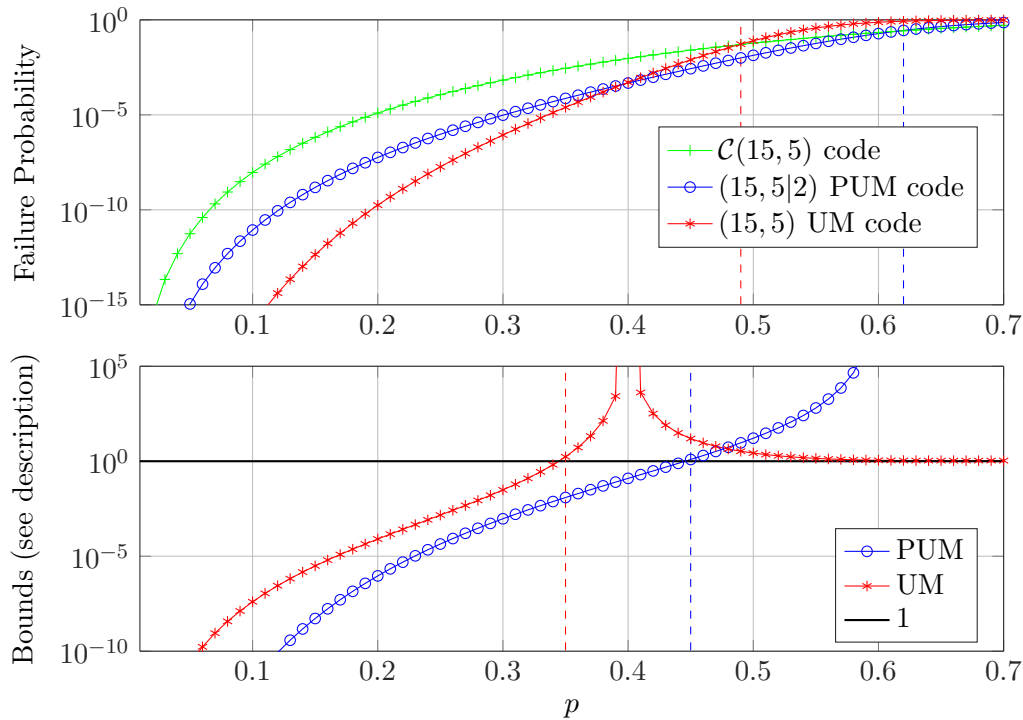}
\caption{Comparison of failure probabilities of a $(15,5 | 2)$ PUM code, a $(15,5)$ UM code, and independent encoding in $\C(15,5)$, in an erasure channel with erasure probability $p$.
The codes are based on MDS codes (i.e., $\tauzero = n-k$), similar to the results in \cite{KB2016}. {\bf Upper half:} Exact failure probability of the PUM, UM, and $\C$ code. The dashed lines indicate the value $p'$, for which $\Pt>\Pindt$ for all $p<p'$ for UM and PUM, respectively. {\bf Lower half:} Upper bounds on $\tfrac{\pc}{(1-\pb)^2}$ (cf.~proof of Theorem~\ref{thm:UM_vs_MDS}) and $\tfrac{\pb \cdot (\pc+\pd)^2}{\pc \cdot \pa^2}$ (cf.~proof of Theorem~\ref{thm:PUM_vs_MDS}) using Theorem~\ref{lem:tail_bounds}. The dashed lines indicate for which $p$ these values drop below $1$, providing lower bounds on $p'$.}
\label{fig:PUM_vs_independent}
\end{figure}

\subsubsection{Adaptions in the Streaming Scenario}

In \cite{KB2016}, PUM codes were used for streaming. In this scenario, in contrast to the assumption above, we do not know $\i_{L}$ and therefore need to decode $\c_{L}$ in $\Calpha$ instead of $\Czero$ in the first decoding step.
In addition, one is interested in reconstructing $\ct$ for $t$ close to $L$ (i.e., the so-called coding delay $L-t$ should be small).
In this case, $\Qt$ remains unchanged.
However, $\Yat$ is always smaller than $\tfrac{\pa}{1-\pb}$, but approaches $\tfrac{\pa}{1-\pb}$ exponentially in $L-t+1$, i.e.,
\begin{align*}
\Yat = \tfrac{\pa}{1-\pb} - \pb^{L-t+1} \tfrac{\pa}{1-\pb} \to \tfrac{\pa}{1-\pb} \quad (L-t+1 \to \infty).
\end{align*}
This also means that for a large coding delay $L-t$, we approach the failure probability values in Theorem~\ref{thm:PUM_Pt} and \ref{thm:UM_Pt} from below, and the results in this section (Theorem~\ref{thm:UM_vs_MDS} and \ref{thm:PUM_vs_MDS}) also hold.

\subsection{Rank-Metric (P)UM Codes in Network Coding}

In \cite{WSS2014} and \cite{PCFBH2015}, (P)UM codes in the rank metric were used for error correction in variants of random linear network coding.
It was observed numerically that (P)UM codes result in lower failure probabilities compared to independent rank-metric codes in this scenario.
The results in this paper might provide a basis for an analytical explanation of this observation.
Although the channel model in \cite{PCFBH2015} is quite complex, it is reasonable that bounds on the tail probabilities of the error weight can be derived, resulting in similar results as in Section~\ref{subsec:PUM_vs_MDS}.

\section{Conclusion}

In this paper, we have derived analytic expressions for the success probability of (P)UM codes in memoryless channels and have shown applications for them.
Besides the already mentioned future work, the results should be generalized---if possible---to certain channels with memory (e.g.\ burst channels).\\

\noindent
{\bf Acknowledgement:} This work was supported by the German Research Foundation (DFG), grant BO~867/29-3.

\end{document}